\def\eqref#1{equation~\ref{#1}}
\def\1{\bm{1}}
\def\eps{{\epsilon}}
\DeclareMathAlphabet{\mathsfit}{\encodingdefault}{\sfdefault}{m}{sl}
\SetMathAlphabet{\mathsfit}{bold}{\encodingdefault}{\sfdefault}{bx}{n}
\newtheorem{thm}{Theorem}
\newtheorem{lemma}{Lemma}
\definecolor{Gray}{gray}{0.85}
\title{Privacy and Integrity Preserving Training Using Trusted Hardware}
\author{%
  Hanieh Hashemi \\
  ECE Department\\
  University of Southern California\\
  Los Angeles, CA 9007 \\
  \texttt{hashemis@usc.edu} \\
   \And
   Yongqin Wang \\
  ECE Department\\
  University of Southern California\\
  Los Angeles, CA 9007 \\
   \texttt{yongqin@usc.edu} \\
   \And
   Murali Annavaram \\
  ECE Department\\
  University of Southern California\\
  Los Angeles, CA 9007 \\
   \texttt{annavara@usc.edu} \\ 
}
\begin{document}

\maketitle
\begin{abstract}
Privacy and security-related concerns are growing as machine learning reaches diverse application domains. The data holders want to train with private data while exploiting accelerators, such as GPUs, that are hosted in the cloud. However, Cloud systems are vulnerable to the attackers that compromise privacy of data and integrity of computations. This work presents DarKnight, a framework for large DNN training while protecting input privacy and computation integrity. DarKnight relies on cooperative execution between trusted execution environments (\emph{TEE}) and accelerators, where the TEE provides privacy and integrity verification, while accelerators perform the computation heavy linear algebraic operations. 
\end{abstract}

\section{Introduction}
The need for protecting input privacy in Deep learning is growing rapidly in many areas. 
Many of the data holders are, however, not machine learning experts. Hence, data holders are relying on machine learning as a service (MLaaS) platforms~\citep{AzureML,Google,Amazon}. 
These services incorporate ML accelerators such as GPUs for high performance and provide easy to use ML runtimes to enable data holders to quickly set up their models and train. While these platforms lower the steep learning curve, they exacerbate the users' concern regarding data privacy. 


This work proposes DarKnight, a framework for accelerating privacy and integrity preserving deep learning using untrusted accelerators. DarKnight is built on top of an MLaaS platform that uses unique collaborative computing between the \emph{TEE} and GPU accelerators to tackle both privacy and security challenges. The data holder places their data, whether for training or inference, within the TEE of a cloud server. TEE provides hardware-assisted security for any data and computing performed within the trusted code base. 
DarKnight uses TEE to encode input data using a customized matrix masking technique and then uses GPUs to accelerate DNN's linear computations on the encoded data. 
Linear operations (convolution, matrix multiplication, etc) are significantly faster on a GPU compared to a TEE-enabled CPU. Therefore, DarKnight distributes these compute-intensive linear operations to GPUs. DarKnight's usage of TEEs is limited to protecting the privacy of data through a customized matrix masking and performing non-linear operations (ReLU, Maxpool).

TEE-GPU collaboration is first used in~\citep{tramer2018slalom} for inference. However, the method cannot be used for training as elaborated in their paper. Several prior works on protecting privacy use cryptography techniques on Finite Fields to provide data privacy. Such approaches limit their usage to arithmetic on quantized models~\citep{mohassel2017secureml,gascon2017privacy, so2019codedprivateml,wagh2019securenn, juvekar2018gazelle}. Quantization for deep learning is a challenging task. DarKnight supports \emph{floating point} model training and control the information leakage by encoding parameters. DarKnight can also detect any malicious activities of untrusted GPUs by its computation integrity feature. Furthermore, DarKnight can protect privacy and integrity even in the presence of a subset of colluding GPUs that try to extract information or sabotage the computation. 

\section{Related work and Background}
\label{sec:background}
 \newcolumntype{L}{>{\centering\arraybackslash}m{0.017\linewidth}} 
  \newcolumntype{D}{>{\arraybackslash}m{0.32\linewidth}} 
\begin{table*}[htb]
\caption{Comparison of applications and security guarantees of various prior techniques on neural networks' security}
\vskip -0.1mm
\label{tab:background}
\resizebox{\textwidth}{!}{%
\begin{tabular}{lcccccccccccc}
	\hline
	\hline
	\textbf{Method} & \textbf{Training} & \textbf{Inference} & \textbf{DP} & \textbf{MPC} & \textbf{HE} & \textbf{TEE} & \textbf{Data Privacy} &  \textbf{Model Privacy(Client)}&\textbf{Model Privacy(Server)}&\textbf{Integrity}&\textbf{GPU Acceleration}&\textbf{Large DNNs}\\
    \hline
	SecureNN~\citep{wagh2019securenn} &$\bullet$ & $\bullet$ &  $\circ$& $\bullet$  & $\circ$&$\circ$&$\bullet$&$\bullet$&$\bullet$&$\circ$&$\bullet$&$\circ$\\
	Chiron~\citep{hunt2018chiron} &$\bullet$ & $\bullet$ & $\circ$	& $\circ$ & $\circ$&$\bullet$& $\bullet$&$\bullet$&$\bullet$& $\bullet$& $\circ$&$\circ$\\
	MSP~\citep{hynes2018efficient} &$\bullet$ &$\bullet$  & $\circ$	& $\circ$ &$\circ$ &$\bullet$& $\bullet$&$\bullet$&$\bullet$ &$\bullet$&$\circ$&$\circ$ \\

	Gazelle~\citep{juvekar2018gazelle} &$\circ$ &$\bullet$  &$\circ$  & $\circ$ & $\bullet$ &$\circ$&$\bullet$&$\circ$&$\circ$&$\circ$&$\bullet$&$\bullet$\\
	MiniONN~\citep{liu2017oblivious} &$\circ$ &$\bullet$  & $\circ$ & $\bullet$ & $\bullet$ &$\circ$&$\bullet$&$\bullet$&$\circ$&$\circ$&$\bullet$&$\bullet$\\
	CryptoNets~\citep{gilad2016cryptonets} &$\circ$ &$\bullet$  & $\circ$ & $\bullet$ & $\bullet$ &$\circ$&$\bullet$&$\bullet$&$\circ$&$\circ$&$\bullet$&$\bullet$\\
	Slalom~\citep{tramer2018slalom} &$\circ$ &$\bullet$  & $\circ$ & $\circ$ &$\circ$ & $\bullet$ & $\bullet$ &$\bullet$ &$\circ$&$\bullet$&$\bullet$&$\bullet$\\
	Origami~\citep{narra2019privacy} & $\circ$&$\bullet$  & $\circ$ & $\circ$ & $\circ$& $\circ$ & $\bullet$ &$\circ$&$\circ$&$\circ$&$\bullet$&$\bullet$ \\
	Shredder~\citep{mireshghallah2020shredder} & $\circ$&$\circ$  & $\circ$ & $\circ$ & $\circ$& $\bullet$ & $\bullet$ &$\circ$&$\circ$&$\circ$&$\bullet$&$\bullet$ \\
	Delphi~\citep{mishra2020delphi} & $\circ$&$\bullet$  & $\circ$ & $\bullet$ & $\bullet$&$\bullet$ & $\bullet$&$\bullet$&$\circ$&$\circ$&$\bullet$&$\bullet$ \\
	\textbf{DarKnight} & $\bullet$&$\bullet$  &$\circ$ & $\circ$ &$\circ$ &$\bullet$ &$\bullet$  &$\bullet$ &$\circ$&$\bullet$ &$\bullet$&$\bullet$\\
	\hline
\end{tabular}
}
\vskip -0.1mm
\end{table*}
There are a variety of approaches for protecting input and model privacy and computation integrity during DNN training and inference. These methods provide different privacy guarantees~\cite{mirshghallah2020privacy}. \textit{Homomorphic encryption (HE)} techniques encrypt input data and then perform inference directly on encrypted data. They usually provide a high theoretical privacy guarantee on data leakage, albeit with a significant performance penalty, and hence are rarely used in training DNNs. 
\textit{Secure multi-party computing (MPC)} is another approach, where multiple servers may use custom data exchange protocols to protect input data. 
They mostly use secret sharing schemes and have super-linear overhead as the number of sharers and colluding entities grow. 
An entirely orthogonal approach is to use \textit{differential privacy (DP)}, which protects individual users' information through probabilistic guarantees by inserting noise signals to some parts of the computation. The tradeoff between utility and privacy is a challenge in this line of work. 
TEEs attracted attention recently for their privacy and integrity properties~\cite{asvadishirehjini2020goat,mo2020darknetz, ng2019goten}. Among TEE-based approaches,~\cite{tramer2018slalom} introduced Slalom an \emph{inference} framework that uses TEE-GPU collaboration to protect data privacy and integrity. However, as stated in their work their model was not designed for training DNNs. \textit{Instance Hiding} is a recently introduced method~\cite{huang2020instahide}. In this work authors combined multiple images from a private dataset, merge them with a public image set, and using a sign flip function on pixels as random noise parameters. This method processes the encoded data without any decoding. However, privacy guarantees are not theoretically guaranteed, and in~\cite{carlini2020attack} authors designed an attack to break the system. In Table~\ref{tab:background}, we compare some of these approaches based on their privacy and integrity guarantees, and their applications.

\begin{comment}
\begin{table*}[!ht]
\centering
\caption{Various prior techniques and their applicability}
\label{tab:background}
\resizebox{\textwidth}{!}{%
\begin{tabular}{c|c|c|c|c|c}
\hline
\hline
           & \textbf{HE} & \textbf{MPC}   & \textbf{TEE} & \textbf{DiffP} & \textbf{Noise} \\ \hline
\makecell{\textbf{Inference}} &  \makecell{FHME~\citep{gentry2009fully},\\ MiniONN~\citep{liu2017oblivious}, \\ CryptoNets~\citep{gilad2016cryptonets}, \\ Gazelle~\citep{juvekar2018gazelle}  }   &  \makecell{SGXCMP~\citep{bahmani2017secure},\\ SecureML~\citep{mohassel2017secureml} } & \makecell{ Mlcapsule~\citep{hanzlik2018mlcapsule},\\  ObliviousTEE~\citep{ohrimenko2016oblivious},\\ P-TEE~\citep{gu2018securing},\\  Slalom~\citep{tramer2018slalom},\\ Origami~~\citep{narra2019privacy}}  &                      &  \makecell{Arden~\citep{wang2018not}, \\ NOffload~\citep{leroux2018privacy}, \\ Shredder~\citep{mireshghallah2020shredder}}     \\ \hline
\makecell{\textbf{Training}}  &       & \makecell{ SecureML~\citep{mohassel2017secureml},\\ SecureNN~\citep{wagh2019securenn},\\ ABY3~\citep{mohassel2018aby3} }& \makecell{MSP~\citep{hynes2018efficient}, \\ Chiron~\citep{hunt2018chiron} }  &  \makecell{DiffP~\citep{abadi2016deep}, \\ Rappor~\citep{erlingsson2014rappor}, \\ Apple~\citep{team2017learning} \\ PP DNN~\citep{shokri2015privacy}  }                    &  \\  
\hline
\end{tabular}
}
\end{table*}
\end{comment}

\section{DarKnight}
\label{sec:model}
\textbf{System Structure}: Our system model for learning is shown in Figure~\ref{fig:model1}. 
We show $K'$ GPU accelerators that participate in linear computations ($\text{GPU}_1, \text{GPU}_{K'}$) on data that is encoded in the TEE. In this work we use Intel SGX as our TEE.



\textbf{Threat Model:} 
The threat model on the server-side is a dynamic malicious adversary. 
Whenever GPUs receive data from TEE, they may use known techniques to extract information about the original data or inject faults in the computation. Moreover, a subset of \emph{colluding GPUs} may try to extract information by collaborating with each other or inject faults to sabotage the training. 
In a system with $K'$ accelerator GPUs, DarKnight provides: 

\textbf{Data Privacy:} DarKnight provides perfect privacy with IEEE single-precision arithmetic. In Floating-Point (FP) arithmetic, perfect privacy at a given precision is when the information leakage between encoded data and raw data is less than the round off error. Namely, $I(X:X') < FP.precision$, where I is the mutual information~\citep{cover1999elements, guo2020secure}. 

\textbf{Integrity:} DarKnight is (K'-1)-secure, namely it can detect any malicious computation even if K'-1 GPUs send erroneous results to TEE. 

\textbf{Collusion Tolerance:} DarKnight provide perfect privacy \textbf{and} integrity  when $M$ GPUs collude, where $M$ is a function of K' and the number of inputs that can be encoded, as described later. 
\begin{figure}
 \centering
 \includegraphics[width=0.50\textwidth]{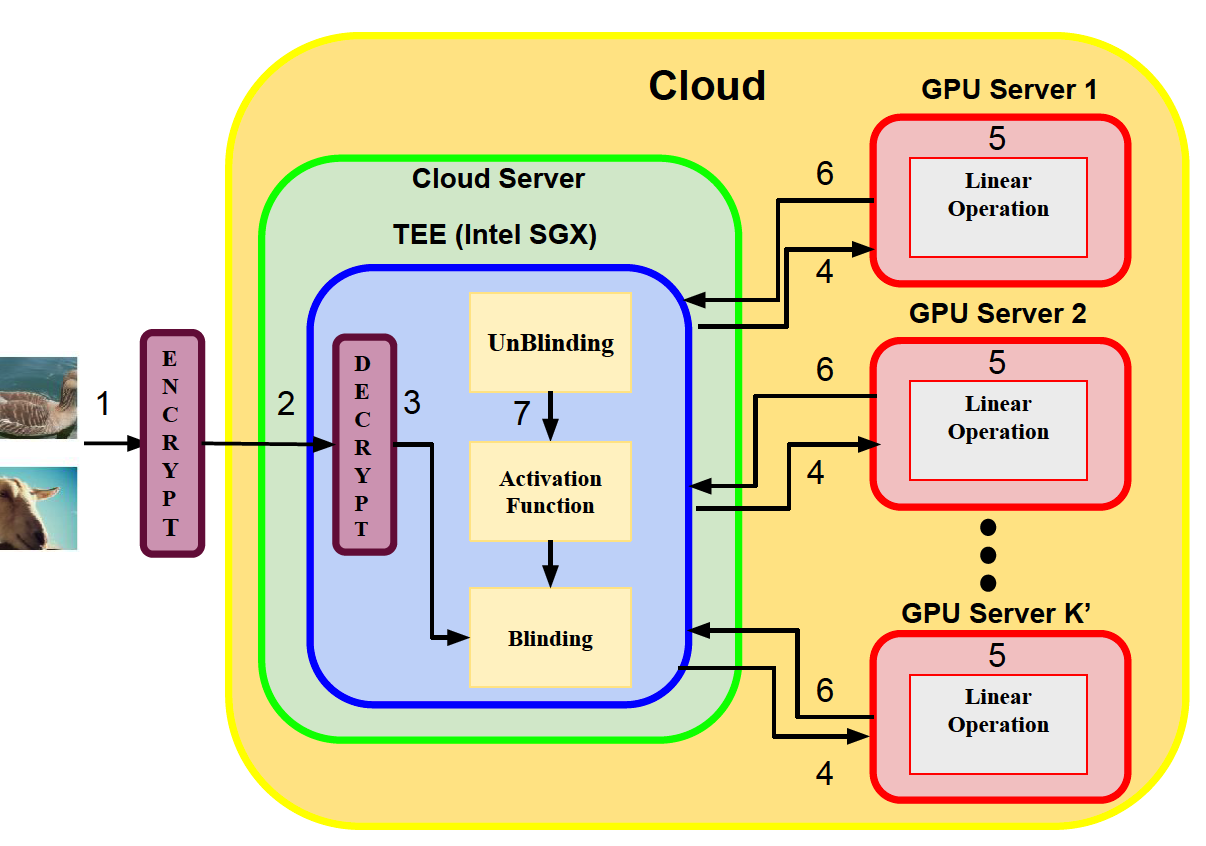}
 \caption{Forward/backward pass of DarKnight}
 \label{fig:model1}
 \vskip -0.2in
\end{figure}
\subsection{DarKnight Flow}
The initial model ($\mathbf{W}$)  that a user wants to train is loaded into the cloud server and is made accessible to the untrusted GPUs as well. DarKnight then uses the following steps: (1) A batch of training/inference input data set is encrypted by the client using mutually agreed keys with TEE and sent to the server. (2) TEE decrypts the images and starts the encoding process. (3) During the forward/backward pass of training, each layer requires linear and nonlinear operations. The linear operations are compute-intensive and will be offloaded to GPUs.  DarKnight's encoding mechanism is used to seal the data before sending the data to GPU accelerators. To seal the data, DarKnight uses the notion of a \textit{virtual batch}, where $K$ inputs and a random noise are linearly combined to form $K+1$ coded inputs. 
The size of the virtual batch is limited by the size of the TEE memory that is necessary to encode $K$ images, typically 4-8 images at a time.  
(4) The encoded data is offloaded to GPUs for linear operation. Each GPU receives at most one encoded data (5) GPUs perform linear operations on different encoded data sets and return the results to TEE in step (6). The TEE decodes the received computational outputs using DarKnight's decoding strategy and then performs any non-linear operations within the TEE in step (7). This process is repeated both for forward pass and backward propagation of each layer.   
\emph{In a system with $K'$ GPUs and virtual batch size $K$, DarKnight can provide data privacy and computational integrity by tolerating up to $M$ colluding GPUs, where $K+M+1 \leq K'$.}

\section{Privacy in Training}
\label{sec:training}
for simplicity, we first show how this mechanism works for a system in which GPUs are not colluding and next we expand the encoding to support a system with $M$ colluding GPUs in Appendix B. For a model with $L$ layers which is being trained with a batch of $K$ inputs, the model parameters $\mathbf{W}_{l}$ at layer $l$ are updated using the well known SGD process as:
\begin{equation}
\mathbf{W}^{\text{new}}_{l} = \mathbf{W}^{\text{old}}_{l} - \eta\times \triangledown \mathbf{W}_{l},\quad
\triangledown \mathbf{W}_{l}=\frac 1 K \sum_{i=1}^K ~ \langle \delta^{(i)}_{l} , {\mathbf x^{(i)}_{l}}\rangle
\label{eq:sgd}
\end{equation}
Here $x_l^{(i)}$ is the $i^{\text{th}}$ input of layer $l$. $\eta$ is the learning rate, and $\delta^{(i)}_{l}$ is the gradient of the loss for the $i^{\text{th}}$ point in the training batch, with respect to the output of layer $l$. 
\subsection{Forward Pass}
 At a layer $l$ the forward pass, we need to  compute $\mathbf y_l=\langle \mathbf W_l~,~\mathbf x_l\rangle$, where $\langle \cdot,\cdot\rangle$ corresponds to the bilinear operation at that layer (e.g. matrix product, convolution, etc.).  After the linear operation finishes, an activation function ($g(\cdot)$) creates the next layer input $\mathbf x_{l+1}=\text{g}(\mathbf y_l)$.  Within this context, DarKnight first receives a set of $K$ inputs $\mathbf x_0^{(1)},\dots,\mathbf x_0^{(K)}$ for a batch training from a client. Our goal is to perform linear calculations of $\mathbf y_0^{(1)}= \langle\mathbf W_0 , \mathbf x_0^{(1)}\rangle,\dots,\mathbf y_0^{(K)}=\langle\mathbf W_0, \mathbf x_0^{(K)}\rangle$ on the GPUs without exposing the inputs to the GPU. Note that the subscript $0$ in all these variables refers to the first layer. At this point, we drop the subscript for a more clear notation. Also, we apply $\mathbf {\color{red}x}$ for the inputs that need to be protected and $\mathbf{\color{blue}\bar{x}}$ for the encoded inputs to visually distinguish different notations. DarKnight must protect ${\mathbf x^{(i)}_{l}}$ for each layer of the DNN when the layer's linear operations are outsourced to GPUs. 
 
 \textbf{Key Insight:} The main idea behind DarKnight's privacy protection scheme is the fact that the most computationally intensive operator (such as convolutions) is \emph{bilinear}. Thus, instead of asking a GPU to calculate $\langle \mathbf W,\mathbf {\color{red} x^{(i)}}\rangle$, which exposes the inputs, DarKnight uses matrix masking to linearly combine the inputs and add a random noise to them. Due to the bilinear property, any linear operation on $K$ masked inputs can be recovered if there are $K$ different linear computations performed.


\textbf{DarKnight Encoding}: Using a customized version of matrix masking~\citep{cox1980suppression, cox1994matrix, kim1986method, spruill1983confidentiality,yu2019lagrange}, The SGX based enclave within the cloud server first receives a set of inputs from a data holder. Then the DarKnight scheme creates $K+1$ encoding within the SGX from $K$ data inputs (${\color{red}{\mathbf x}^{(1)}},\dots,{\color{red}{\mathbf x}^{(K)}}$), as follows,
\begin{align}\label{eq:inference_blinding}
{\color{blue}\bar{\mathbf x}^{(i)}}\quad= \quad \alpha_{i,1} {\color{red}\mathbf{x}^{(1)}} + \dots+ \alpha_{i,K} {\color{red}\mathbf{x}^{(K)}}  +\alpha_{i,(K+1)} \mathbf{r}~
\end{align}
Where $i=1,\dots,(K+1)$. The scalars $\alpha_{i,j}$, and the noise vector $\mathbf r$ are randomly generated; and the size of $\mathbf r$ matches that of ${\color{red}\mathbf x}$.
The scalars $\alpha_{i,j}$'s are represented by matrix $\mathbf A \in \mathbb R^{(K+1),(K+1)}$, which are dynamically generated for each virtual batch and securely stored inside SGX for decoding. As we prove later, by revealing the values ${\color{blue}\bar{\mathbf x}^{(i)}}$'s to GPUs, we protect the privacy of inputs ${\color{red}\mathbf x^{(i)}}$'s. At the next step, the encoded data ${\color{blue}\bar{\mathbf x}^{(i)}}$'s are sent to the GPUs which performs the following computations:
${\color{blue}\bar{\mathbf y}^{(i)}} =\langle \mathbf W , {\color{blue}\bar{\mathbf x}^{(i)}}\rangle, \quad i=1,\dots,(K+1)$.
Please note that each GPU only receives one encoded data. Note-worthily matrix $\mathbf A$ can be chosen such that its condition number close to one, so that encoding and decoding algorithm remains numerically stable. Hence, orthogonal matrices serve us the best.

\textbf{DarKnight Decoding}:  The $K+1$ outputs ${\color{blue}\bar{\mathbf y}^{(i)}}$ returned from the GPUs must be decoded within the SGX to extract the original results ${\color{red}\mathbf y^{(i)}}$.  These value can be extracted  as follows,
\begin{align}
    {\color{blue}\bar{\mathbf Y}}=\left\langle \mathbf W, [{\color{blue}\bar{\mathbf x}^{(1)}},\dots,{\color{blue}\bar{\mathbf x}^{(K+1)}}] \right\rangle =
    \underbrace{\left\langle \mathbf W, [{\color{red}{\mathbf x}^{(1)}},\dots,{\color{red}\mathbf x^{(K)}},\mathbf r] \right\rangle}_{{\color{red}\mathbf Y}} ~\cdot \mathbf A~\Rightarrow~ {{\color{red}\mathbf Y}}={\color{blue}\bar{\mathbf Y}}\cdot \mathbf A^{-1}~
\end{align}
\subsection{Backward Propagation}
 The decoding process for forward pass exploited the invariant property of model parameter for any given input such  that $\left\langle \mathbf W, [{\color{blue}\bar{\mathbf x}^{(1)}},\dots,{\color{blue}\bar{\mathbf x}^{(k+1)}}] \right\rangle = \left\langle \mathbf W, [{\color{red}{\mathbf x}^{(1)}},\dots,{\color{red}\mathbf x^{(k)}},\mathbf r] \right\rangle ~\cdot \mathbf A~$, meaning that a single $\mathbf{W}$ was shared between all the inputs of that layers. However, during the backward propagation process, we a have different $\delta_l^{(i)}$ for each input $\mathbf {\color{red}x_l^{(i)}}$. Thus, decoding the $\langle \delta^{(i)}_{l}, \mathbf {\color{red}x^{(i)}_{l}}\rangle$ from obfuscated inputs $\langle \delta^{(i)}_{l} , {\color{blue}\bar{\mathbf x}^{(i)}_{l}}\rangle$  is a more challenging approach that requires specific decoding approach.

\textbf{Key Insight:} While backward propagation operates on a batch of inputs, it is not necessary to compute the $\langle \delta^{(i)}_{l}, {\color{red}\mathbf x^{(i)}_{l}}\rangle$ for each input ${\color{red}\mathbf x^{(i)}}$. Instead, the training process only needs to compute cumulative parameter updates for the entire batch of inputs. Hence, what is necessary to compute is the entire $\triangledown \mathbf{W}_{l}$ which is an average over all  updates corresponding to inputs in the batch. 

\textbf{DarKnight Encoding:} DarKnight exploits this insight to protect privacy without significantly increasing the encoding and decoding complexity of the blinding process. 
As shown in Equation~\eqref{eq:sgd}, there are $K$ inputs on which gradients are computed. DarKnight calculates the overall weight update in the backward propagation by summing up the following $K+1$ equations each of which are computed on a different GPUs,
\begin{equation}\label{eq:gamma_lin}
\triangledown \mathbf{W} = \sum_{j=1}^{K+1}  \gamma_{j} \text{Eq}_{j}, \qquad \text{Eq}_{j} = \left\langle \sum_{i=1}^K \beta_{j,i}~ \mathbf \delta^{(i)}~,{\color{blue}\bar{\mathbf x}^{(j)}} \right\rangle~\quad~
\end{equation}
In the above equations,the encoded input ${\color{blue}\bar{\mathbf x}^{(j)}}$ to a layer is the same that was already calculated during the forward pass using Equation~\eqref{eq:inference_blinding}. Hence, the TEE can simply reuse the forward pass encoding without having to re-compute. The gradients are multiplied with the $\beta_{j,i}$ in the GPUs after which the GPUs compute the bi-linear operation to compute $\text{Eq}_{j}$. 

In contrast to inference where $\mathbf{W}$'s are fixed for all the inputs, during training the parameter updates are with respect to a specific input. Hence, each $\delta^{(i)}_l$'s corresponds to different ${\color{red}\mathbf{x}^{(i)}_l}$ during training. As such, DarKnight uses a different encoding strategy where the overall parameter updates $\triangledown \mathbf{W}$ can be decoded very efficiently. In particular, DarKnight selects $\alpha_{j,i}$'s, $\beta_{j,i}$'s and $\gamma_i$'s such that
\begin{equation}
    \mathbf B^\intercal\cdot \mathbf \Gamma\cdot \mathbf A = \begin{bmatrix}1 & 0 & \dots & 0 & 0
  \\0 & 1 & 0 & \dots & 0
  \\\vdots & \ddots& \ddots & \ddots & \vdots
\\ 0 & \dots & 0 & 1 & 0\end{bmatrix}_{K \times (K+1)}
\label{eq:matrix_relation1}
\end{equation}

Assuming batch size is equal to $K$, the $\beta_{i,j}$ parameters used for scaling $\delta$ values is gathered in the $K+1$ by $K$ matrix, $\mathbf B$. $\alpha_{i,j}$'s are gathered in the $K+1$ by $K+1$ matrix $\mathbf A$, the scalar matrix with the same size for intermediate features and $\gamma_i$'s form the diagonal of a $K+1$ by $K+1$ matrix $\Gamma$, that gives us the proper parameters for efficient decoding. Note that the SGX keeps matrix $\Gamma$ and $\mathbf A$ as secret. We provide the details of privacy guarantee in Appendix A.

\textbf{DarKnight Decoding:} Given the constraint imposed on $\alpha_{j,i}$'s, $\beta_{j,i}$'s and $\gamma_i$'s the decoding process is trivially simple to extract $\triangledown \mathbf{W}$. It is easy to see that if the scalars $\alpha_{i,j}$'s, $\beta_{i,j}$'s and $\gamma_i$'s satisfy the relation~\eqref{eq:matrix_relation1}, the decoding process only involves calculating a linear combination of the values in Equation~\eqref{eq:gamma_lin}.
\begin{align}
    \frac 1 K\sum_{j=1}^{K+1}  \gamma_{j} ~ \text{Eq}_{j}=\frac 1 K \sum_{i=1}^K ~ \langle \delta^{(i)}_{l} , {\color{red}\mathbf x^{(i)}_{l}}\rangle=\triangledown \mathbf{W}_l
\end{align}

\textbf{Computational Integrity:}
DarKnight's encoding scheme can be extended to detect computational integrity violations by untrusted GPUs. 
To provide integrity, DarKnight creates one additional linear combination of inputs (say ${\color{blue}\bar{\mathbf x}^{(K+2)}}$), using the same approach as in Equation~\eqref{eq:inference_blinding}. This additional equation allows us to verify the accuracy of each result ${\color{red}{\mathbf y}^{(i)}}$ by computing it redundantly. 

\section{Experiments}

DarKnight's training scheme and the related unique coding requirements are implemented as an SGX enclave thread on an Intel Coffee Lake server. 
We used three different DNN models: VGG16~\citep{simonyan2014very}, ResNet152~\citep{he2016deep} and, MobileNetV2~\citep{sandler2018mobilenetv2} and 
ImageNet~\citep{russakovsky2015imagenet} as our dataset.
\begin{wrapfigure}{r}{0.5\textwidth}
    \includegraphics[width=0.50\textwidth]{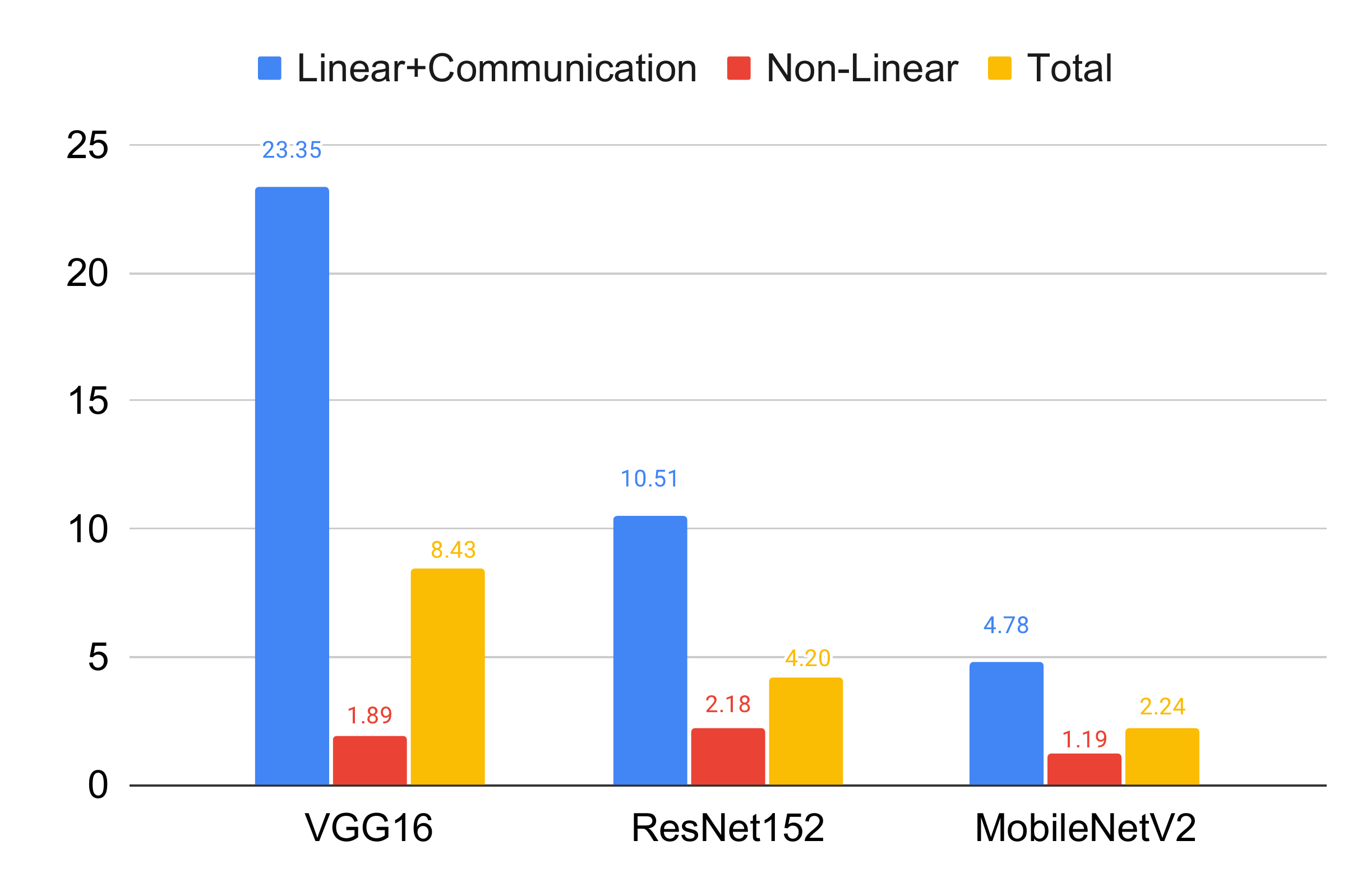}
    \caption{Training Speedup over Baseline}
    \label{fig:trainingtime10}
\end{wrapfigure}
\textbf{Training Execution Time}: Figure~\ref{fig:trainingtime10} demonstrates the speedup of training using DarKnight relative to the baseline fully implemented on SGX with $K=2$ images encoded and offloaded to $3$ GPUs. 
The results break down the execution time spent into linear (GPU operations and communication time with GPU) and non-linear (all other operations) categories. 
The results show that DarKnight speeds up the total linear operation time of VGG16 by $23$x by using the vast GPUs parallelism. 
The baseline has to encryption/decrypt data that do not fit within the SGX memory, such as some of the large intermediate feature maps in training. Hence non-linear operations observe $1.89X$ speedup in DarKnight. Overall the execution time is improved by more than $8X$ with DarKnight. Both ResNet and MobileNet models have batch normalization layers that are computation-intensive and cannot be offload to GPU accelerators. Even in this worst-case scenario, performance gains of $4.2X$ and $2.2X$ are achieved. More results are provided in Appendix C. 



\section*{Acknowledgement}
We would like to express our special gratitude to Mark Tygert and Chuan Guo for the assistance with the privacy guarantee section. We also thank Edward Suh, Wenjie Xiong, Hsien-Hsin Sean Lee for sharing their wisdom with us. We are immensely grateful to Krishna Giri Narra and Caroline Tripple for their valuable feedbacks on the earlier version of this project.
This material is based upon work supported by the Defense Advanced Research Projects Agency (DARPA) under Contract No. HR001117C0053, HR001120C0088, Intel Private AI institute, and Facebook AI Research Award Numbers 2215031173 and 2215031183.
\newpage

\bibliography{iclr2021_conference}

\begin{thebibliography}{35}
\providecommand{\natexlab}[1]{#1}
\providecommand{\url}[1]{\texttt{#1}}
\expandafter\ifx\csname urlstyle\endcsname\relax
  \providecommand{\doi}[1]{doi: #1}\else
  \providecommand{\doi}{doi: \begingroup \urlstyle{rm}\Url}\fi

\bibitem[Amazon(2020)]{Amazon}
Amazon.
\newblock \emph{Machine Learning on AWS}, 2020.
\newblock URL \url{https://aws.amazon.com/machine-learning}.

\bibitem[Asvadishirehjini et~al.(2020)Asvadishirehjini, Kantarcioglu, and
  Malin]{asvadishirehjini2020goat}
Aref Asvadishirehjini, Murat Kantarcioglu, and Bradley Malin.
\newblock Goat: Gpu outsourcing of deep learning training with asynchronous
  probabilistic integrity verification inside trusted execution environment.
\newblock \emph{arXiv preprint arXiv:2010.08855}, 2020.

\bibitem[Carlini et~al.(2020)Carlini, Deng, Garg, Jha, Mahloujifar, Mahmoody,
  Song, Thakurta, and Tramer]{carlini2020attack}
Nicholas Carlini, Samuel Deng, Sanjam Garg, Somesh Jha, Saeed Mahloujifar,
  Mohammad Mahmoody, Shuang Song, Abhradeep Thakurta, and Florian Tramer.
\newblock An attack on instahide: Is private learning possible with instance
  encoding?
\newblock \emph{arXiv preprint arXiv:2011.05315}, 2020.

\bibitem[Cover(1999)]{cover1999elements}
Thomas~M Cover.
\newblock \emph{Elements of information theory}.
\newblock John Wiley \& Sons, 1999.

\bibitem[Cox(1980)]{cox1980suppression}
Lawrence~H Cox.
\newblock Suppression methodology and statistical disclosure control.
\newblock \emph{Journal of the American Statistical Association}, 75\penalty0
  (370):\penalty0 377--385, 1980.

\bibitem[Cox(1994)]{cox1994matrix}
LH~Cox.
\newblock Matrix masking methods for disclosure limitation in microdata.
\newblock \emph{Surv. Methodol.}, 20:\penalty0 165--169, 1994.

\bibitem[Gasc{\'o}n et~al.(2017)Gasc{\'o}n, Schoppmann, Balle, Raykova,
  Doerner, Zahur, and Evans]{gascon2017privacy}
Adri{\`a} Gasc{\'o}n, Phillipp Schoppmann, Borja Balle, Mariana Raykova, Jack
  Doerner, Samee Zahur, and David Evans.
\newblock Privacy-preserving distributed linear regression on high-dimensional
  data.
\newblock \emph{Proceedings on Privacy Enhancing Technologies}, 2017\penalty0
  (4):\penalty0 345--364, 2017.

\bibitem[Gilad-Bachrach et~al.(2016)Gilad-Bachrach, Dowlin, Laine, Lauter,
  Naehrig, and Wernsing]{gilad2016cryptonets}
Ran Gilad-Bachrach, Nathan Dowlin, Kim Laine, Kristin Lauter, Michael Naehrig,
  and John Wernsing.
\newblock Cryptonets: Applying neural networks to encrypted data with high
  throughput and accuracy.
\newblock In \emph{International Conference on Machine Learning}, pp.\
  201--210, 2016.

\bibitem[Google(2020)]{Google}
Google.
\newblock \emph{Google AI platform}, 2020.
\newblock URL \url{https://cloud.google.com/products/ai}.

\bibitem[Guo et~al.(2020)Guo, Hannun, Knott, van~der Maaten, Tygert, and
  Zhu]{guo2020secure}
Chuan Guo, Awni Hannun, Brian Knott, Laurens van~der Maaten, Mark Tygert, and
  Ruiyu Zhu.
\newblock Secure multiparty computations in floating-point arithmetic.
\newblock \emph{arXiv preprint arXiv:2001.03192}, 2020.

\bibitem[He et~al.(2016)He, Zhang, Ren, and Sun]{he2016deep}
Kaiming He, Xiangyu Zhang, Shaoqing Ren, and Jian Sun.
\newblock Deep residual learning for image recognition.
\newblock In \emph{Proceedings of the IEEE conference on computer vision and
  pattern recognition}, pp.\  770--778, 2016.

\bibitem[Huang et~al.(2020)Huang, Song, Li, and Arora]{huang2020instahide}
Yangsibo Huang, Zhao Song, Kai Li, and Sanjeev Arora.
\newblock Instahide: Instance-hiding schemes for private distributed learning.
\newblock In \emph{International Conference on Machine Learning}, pp.\
  4507--4518. PMLR, 2020.

\bibitem[Hunt et~al.(2018)Hunt, Song, Shokri, Shmatikov, and
  Witchel]{hunt2018chiron}
Tyler Hunt, Congzheng Song, Reza Shokri, Vitaly Shmatikov, and Emmett Witchel.
\newblock Chiron: Privacy-preserving machine learning as a service.
\newblock \emph{arXiv preprint arXiv:1803.05961}, 2018.

\bibitem[Hynes et~al.(2018)Hynes, Cheng, and Song]{hynes2018efficient}
Nick Hynes, Raymond Cheng, and Dawn Song.
\newblock Efficient deep learning on multi-source private data.
\newblock \emph{arXiv preprint arXiv:1807.06689}, 2018.

\bibitem[Juvekar et~al.(2018)Juvekar, Vaikuntanathan, and
  Chandrakasan]{juvekar2018gazelle}
Chiraag Juvekar, Vinod Vaikuntanathan, and Anantha Chandrakasan.
\newblock $\{$GAZELLE$\}$: A low latency framework for secure neural network
  inference.
\newblock In \emph{27th $\{$USENIX$\}$ Security Symposium ($\{$USENIX$\}$
  Security 18)}, pp.\  1651--1669, 2018.

\bibitem[Kim(1986)]{kim1986method}
Jay~J Kim.
\newblock A method for limiting disclosure in microdata based on random noise
  and transformation.
\newblock In \emph{Proceedings of the section on survey research methods}, pp.\
   303--308. American Statistical Association Alexandria, VA, 1986.

\bibitem[Krizhevsky et~al.(2009)Krizhevsky, Hinton,
  et~al.]{krizhevsky2009learning}
Alex Krizhevsky, Geoffrey Hinton, et~al.
\newblock Learning multiple layers of features from tiny images.
\newblock \emph{online: http://www. cs. toronto. edu/kriz/cifar. html}, 2009.

\bibitem[Liu et~al.(2017)Liu, Juuti, Lu, and Asokan]{liu2017oblivious}
Jian Liu, Mika Juuti, Yao Lu, and Nadarajah Asokan.
\newblock Oblivious neural network predictions via minionn transformations.
\newblock In \emph{Proceedings of the 2017 ACM SIGSAC Conference on Computer
  and Communications Security}, pp.\  619--631, 2017.

\bibitem[Matthews et~al.(2011)Matthews, Harel, et~al.]{matthews2011data}
Gregory~J Matthews, Ofer Harel, et~al.
\newblock Data confidentiality: A review of methods for statistical disclosure
  limitation and methods for assessing privacy.
\newblock \emph{Statistics Surveys}, 5:\penalty0 1--29, 2011.

\bibitem[Microsoft(2020)]{AzureML}
Microsoft.
\newblock \emph{Azure Machine Learning}, 2020.
\newblock URL
  \url{https://azure.microsoft.com/en-us/services/machine-learning}.

\bibitem[Mireshghallah et~al.(2020)Mireshghallah, Taram, Ramrakhyani, Jalali,
  Tullsen, and Esmaeilzadeh]{mireshghallah2020shredder}
Fatemehsadat Mireshghallah, Mohammadkazem Taram, Prakash Ramrakhyani, Ali
  Jalali, Dean Tullsen, and Hadi Esmaeilzadeh.
\newblock Shredder: Learning noise distributions to protect inference privacy.
\newblock In \emph{Proceedings of the Twenty-Fifth International Conference on
  Architectural Support for Programming Languages and Operating Systems}, pp.\
  3--18, 2020.

\bibitem[Mirshghallah et~al.(2020)Mirshghallah, Taram, Vepakomma, Singh,
  Raskar, and Esmaeilzadeh]{mirshghallah2020privacy}
Fatemehsadat Mirshghallah, Mohammadkazem Taram, Praneeth Vepakomma, Abhishek
  Singh, Ramesh Raskar, and Hadi Esmaeilzadeh.
\newblock Privacy in deep learning: A survey.
\newblock \emph{arXiv preprint arXiv:2004.12254}, 2020.

\bibitem[Mishra et~al.(2020)Mishra, Lehmkuhl, Srinivasan, Zheng, and
  Popa]{mishra2020delphi}
Pratyush Mishra, Ryan Lehmkuhl, Akshayaram Srinivasan, Wenting Zheng, and
  Raluca~Ada Popa.
\newblock Delphi: A cryptographic inference service for neural networks.
\newblock In \emph{29th $\{$USENIX$\}$ Security Symposium ($\{$USENIX$\}$
  Security 20)}, 2020.

\bibitem[Mo et~al.(2020)Mo, Shamsabadi, Katevas, Demetriou, Leontiadis,
  Cavallaro, and Haddadi]{mo2020darknetz}
Fan Mo, Ali~Shahin Shamsabadi, Kleomenis Katevas, Soteris Demetriou, Ilias
  Leontiadis, Andrea Cavallaro, and Hamed Haddadi.
\newblock Darknetz: towards model privacy at the edge using trusted execution
  environments.
\newblock In \emph{Proceedings of the 18th International Conference on Mobile
  Systems, Applications, and Services}, pp.\  161--174, 2020.

\bibitem[Mohassel \& Zhang(2017)Mohassel and Zhang]{mohassel2017secureml}
Payman Mohassel and Yupeng Zhang.
\newblock Secureml: A system for scalable privacy-preserving machine learning.
\newblock In \emph{2017 IEEE Symposium on Security and Privacy (SP)}, pp.\
  19--38. IEEE, 2017.

\bibitem[Narra et~al.(2019)Narra, Lin, Wang, Balasubramaniam, and
  Annavaram]{narra2019privacy}
Krishna~Giri Narra, Zhifeng Lin, Yongqin Wang, Keshav Balasubramaniam, and
  Murali Annavaram.
\newblock Privacy-preserving inference in machine learning services using
  trusted execution environments.
\newblock \emph{arXiv preprint arXiv:1912.03485}, 2019.

\bibitem[Ng et~al.(2019)Ng, Chow, Woo, Wong, and Zhao]{ng2019goten}
Lucien~KL Ng, Sherman~SM Chow, Anna~PY Woo, Donald~PH Wong, and Yongjun Zhao.
\newblock Goten: Gpu-outsourcing trusted execution of neural network training
  and prediction.
\newblock 2019.

\bibitem[Russakovsky et~al.(2015)Russakovsky, Deng, Su, Krause, Satheesh, Ma,
  Huang, Karpathy, Khosla, Bernstein, et~al.]{russakovsky2015imagenet}
Olga Russakovsky, Jia Deng, Hao Su, Jonathan Krause, Sanjeev Satheesh, Sean Ma,
  Zhiheng Huang, Andrej Karpathy, Aditya Khosla, Michael Bernstein, et~al.
\newblock Imagenet large scale visual recognition challenge.
\newblock \emph{International journal of computer vision}, 115\penalty0
  (3):\penalty0 211--252, 2015.

\bibitem[Sandler et~al.(2018)Sandler, Howard, Zhu, Zhmoginov, and
  Chen]{sandler2018mobilenetv2}
Mark Sandler, Andrew Howard, Menglong Zhu, Andrey Zhmoginov, and Liang-Chieh
  Chen.
\newblock Mobilenetv2: Inverted residuals and linear bottlenecks.
\newblock In \emph{Proceedings of the IEEE conference on computer vision and
  pattern recognition}, pp.\  4510--4520, 2018.

\bibitem[Simonyan \& Zisserman(2014)Simonyan and Zisserman]{simonyan2014very}
Karen Simonyan and Andrew Zisserman.
\newblock Very deep convolutional networks for large-scale image recognition.
\newblock \emph{arXiv preprint arXiv:1409.1556}, 2014.

\bibitem[So et~al.(2019)So, Guler, Avestimehr, and
  Mohassel]{so2019codedprivateml}
Jinhyun So, Basak Guler, A~Salman Avestimehr, and Payman Mohassel.
\newblock Codedprivateml: A fast and privacy-preserving framework for
  distributed machine learning.
\newblock \emph{arXiv preprint arXiv:1902.00641}, 2019.

\bibitem[Spruill(1983)]{spruill1983confidentiality}
Nancy Spruill.
\newblock The confidentiality and analytic usefulness of masked business
  microdata.
\newblock \emph{Proceedings of the Section on Survey Research Methods, 1983},
  pp.\  602--607, 1983.

\bibitem[Tramer \& Boneh(2018)Tramer and Boneh]{tramer2018slalom}
Florian Tramer and Dan Boneh.
\newblock Slalom: Fast, verifiable and private execution of neural networks in
  trusted hardware.
\newblock \emph{arXiv preprint arXiv:1806.03287}, 2018.

\bibitem[Wagh et~al.(2019)Wagh, Gupta, and Chandran]{wagh2019securenn}
Sameer Wagh, Divya Gupta, and Nishanth Chandran.
\newblock Securenn: 3-party secure computation for neural network training.
\newblock \emph{Proceedings on Privacy Enhancing Technologies}, 2019\penalty0
  (3):\penalty0 26--49, 2019.

\bibitem[Yu et~al.(2019)Yu, Li, Raviv, Kalan, Soltanolkotabi, and
  Avestimehr]{yu2019lagrange}
Qian Yu, Songze Li, Netanel Raviv, Seyed Mohammadreza~Mousavi Kalan, Mahdi
  Soltanolkotabi, and Salman~A Avestimehr.
\newblock Lagrange coded computing: Optimal design for resiliency, security,
  and privacy.
\newblock In \emph{The 22nd International Conference on Artificial Intelligence
  and Statistics}, pp.\  1215--1225. PMLR, 2019.

\end{thebibliography}
\bibliographystyle{iclr2021_conference}
\newpage
\appendix
\section{Privacy Guarantee}
\label{ap:pri}
Darknight provides privacy by matrix masking. Masking keeps all the variables in their floating-point representations while adding Gaussian noise (or uniform noise) to the vector we would like to protect. 

The information leaked with masking indicates how much information the masked vector possesses about the raw data~\citep{guo2020secure, matthews2011data}. In the other words, it represents the amount of information the adversary can potentially gain from the raw data, \textit{without} any assumption or limitation on adversaries power.

We will first explain a general matrix masking introduced by~\citep{cox1980suppression, cox1994matrix}. Next, we will explain Darknight privacy, through the notation used in matrix masking. Finally, we will calculate the information leakage in our masked matrix, as a measure of privacy.

\textbf{Matrix Masking:}\\ 
Introduced by~\citep{cox1980suppression, cox1994matrix}, matrix masking scheme can be used for a variety of reasons such as noise addition, sampling, etc.
The general form of $BXA + C$ is used for protecting Matrix X. In the above formula B, A, and C are called record transformation masks, attribute transformation masks, and displacing masks, respectively. Any of these matrices can be used for encoding data based on the data privacy goal. For instance,~\citep{kim1986method} first added random noise to data and then transformed it to form a distribution with the desired expected value and variance, by carefully tuning $A$ and $B$.~\citep{spruill1983confidentiality} empirically compared different masking schemes including additive and multiplicative noise. Darknight encoding is a form of matrix masking, with the right choice of the matrices $A$, $B$, and $C$. A combination of Matrix Masking and coded computing first introduced in~\cite{yu2019lagrange}, for secure and robust computation.

\textbf{DarKnight Encoding:}\\ 
Following our notation in \eqref{eq:inference_blinding}, our goal is to protect the vectors $\mathbf x_i$, by adding a random noise to each as follows
\begin{align}\label{eq:inference_blinding2}
&\bar{\mathbf x}^{(i)}\quad= \quad \alpha_{i,1} \mathbf{x}^{(1)} + \dots+ \alpha_{i,K} \mathbf{x}^{(K)}  +\alpha_{i,(K+1)} \mathbf{r}~,\nonumber\\ &i=1,\dots,(K+1)~,
\end{align}
where $\mathbf r$ is a random noise vector, and $\alpha_{i,j}$'s are also chosen randomly. 
Now first, we denote $\mathbf X=[\mathbf x^{(1)},\dots,\mathbf x^{(K)}]$ to be the matrix that we would like to protect, and $\bar{\mathbf X}=[\bar{\mathbf x}^{(1)},\dots,\bar{\mathbf x}^{(K)}]$ to be the masked matrix that we send to unsecured GPU. In this case, the equation \eqref{eq:inference_blinding2} can be rewritten as follows. 
\begin{align}
    \bar{\mathbf X}=\mathbf X\cdot A_1 + \mathbf r\cdot \mathbf a_2^T
\end{align}
where the matrix $A=[\alpha_{i,j}]_{i,j}\in\mathbb R^{(K+1)\times (K+1)}$ contains some values of $\alpha_{i,j}$'s, and $\mathbf a_2^T=[\alpha_{1,(K+1)},\dots,\alpha_{(K+1),(K+1)}]$. 

We also prefer to choose a matrix $\mathbf A_1$, with a condition number close to one, so that our encoding and decoding algorithm remains numerically stable. For this purpose, orthogonal matrices serve us the best. In addition to that, the transformation of the matrix whose entities are independent and identically distributed standard normal variants is invariant under orthogonal transformations. Therefore, if an orthogonal matrix is used for encoding, the distribution of the raw data and encoded data remains the same~\citep{kim1986method}, which is preferable in data privacy.

\textbf{Privacy Guarantee:}\\ 
In this section, we bound the information that leaks, when using Darknight's masking approach. The amount of information leaked by $\bar{\mathbf x}^{(i)}$'s about $\mathbf x^{(j)}$ is the \textbf{mutual information} between these two variables~\citep{cover1999elements}. In this setting, each GPU can observe \emph{at most one} encoded data, hence the mutual information is defined by
\begin{align}
    I(\mathbf x^{(j)} ; \bar{\mathbf x}^{(i)})= h(\mathbf x^{(j)})-h(\mathbf x^{(j)} |\bar{\mathbf x}^{(i)})\qquad j=1,\dots K~.
\end{align}
Here, $ h(\cdot)$ denotes the Shannon entropy function. Note that the information that adversary can potentially learn about $\mathbf x^j$ by having $\bar{\mathbf x}^i$ is fundamentally bounded by  $I(\mathbf x^{(j)} ; \bar{\mathbf x}^{(i)})$. Next, we will rigorously bound this information leakage and show how it can be bounded by properties of the noise.
\begin{thm}\label{thm:info_leakage}
Assume that $X^1,\dots,X^K$ are scalars such that $|X^i|\leq C_1$ for all $i$. Suppose $\alpha_{i,j}$'s are real non-zero scalars and $R$ denotes a Gaussian random variable with variance $\sigma^2$. Also $\bar X$ is defined as
\begin{align}
    \bar X=\sum_{j=1}^K \alpha_{j} X^j + \alpha_{(K+1)} R~.
\end{align}
Then the information leaked from $\bar X$ about $X^j$ is bounded by
\begin{align}\label{eq:infor_bound1}
    I\left(X^j ; \bar X\right)\leq \frac{K C_1^2\bar\alpha^2}{2\underset{\bar{}}{\alpha}^2\sigma^2}~,\quad j=1,\dots, K~.
\end{align}
Here $\bar\alpha=\max_{i,j}|{\alpha_{i,j}}|$ and $\underset{\bar{}}{\alpha}=\min_{i,j}| \alpha_{i,j}|$.
\end{thm}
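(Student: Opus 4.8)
The plan is to upper bound $I(X^j;\bar X)$ by the mutual information between the \emph{entire} signal vector $(X^1,\dots,X^K)$ and $\bar X$, since conditioning on all the $X^i$ at once collapses $\bar X$ to a pure Gaussian and makes the conditional entropy exactly computable. Because conditional mutual information is nonnegative, the chain rule gives
\begin{align}
I(X^j;\bar X)\;\le\; I(X^1,\dots,X^K;\bar X)\;=\;h(\bar X)-h\bigl(\bar X\mid X^1,\dots,X^K\bigr).
\end{align}
This reduction is the conceptual heart of the argument: the leakage about a single input is dominated by the joint leakage, for which the noise contribution is cleanly isolated. (I use the symmetric form $I=h(\bar X)-h(\bar X\mid\cdot)$ rather than $h(X^j)-h(X^j\mid\bar X)$ precisely because $\bar X$ is continuous, so all differential entropies are well defined.)

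Next I would evaluate the two entropies. Conditioned on $X^1,\dots,X^K$, the only remaining randomness in $\bar X=\sum_i\alpha_iX^i+\alpha_{(K+1)}R$ is the Gaussian term $\alpha_{(K+1)}R$, so $h(\bar X\mid X^1,\dots,X^K)=\tfrac12\ln\bigl(2\pi e\,\alpha_{(K+1)}^2\sigma^2\bigr)$. For the unconditioned term I invoke the maximum-entropy property of the Gaussian, $h(\bar X)\le\tfrac12\ln\bigl(2\pi e\,\Var(\bar X)\bigr)$, which is legitimate because $\bar X$ has a density and finite variance thanks to its Gaussian component. Subtracting and using independence of $R$ from the $X^i$ (so that $\Var(\bar X)=\Var(\sum_i\alpha_iX^i)+\alpha_{(K+1)}^2\sigma^2$) yields
\begin{align}
I(X^j;\bar X)\;\le\;\tfrac12\ln\!\left(\frac{\Var(\bar X)}{\alpha_{(K+1)}^2\sigma^2}\right)
=\tfrac12\ln\!\left(1+\frac{\Var\bigl(\sum_{i=1}^K\alpha_iX^i\bigr)}{\alpha_{(K+1)}^2\sigma^2}\right).
\end{align}

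It then remains to control the signal variance and linearize the logarithm. Treating the $X^i$ as independent (uncorrelated is enough), $\Var(\sum_i\alpha_iX^i)=\sum_i\alpha_i^2\Var(X^i)\le K\,\bar\alpha^2 C_1^2$, using $\Var(X^i)\le C_1^2$ from $|X^i|\le C_1$ and $|\alpha_i|\le\bar\alpha$. Combining with $\ln(1+x)\le x$ and $\alpha_{(K+1)}^2\ge\underline\alpha^2$ then produces exactly $I(X^j;\bar X)\le \frac{KC_1^2\bar\alpha^2}{2\underline\alpha^2\sigma^2}$.

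I expect the main obstacle to be not any single computation but pinning down the probabilistic hypotheses that make the factor $K$ (rather than $K^2$) appear. A crude triangle-inequality bound $\bigl|\sum_i\alpha_iX^i\bigr|\le K\bar\alpha C_1$ only gives $\Var\le K^2\bar\alpha^2C_1^2$; the claimed linear-in-$K$ bound requires the variances to \emph{add}, which relies on the $X^i$ being uncorrelated and $R$ being independent of them. I would therefore state these independence assumptions explicitly, and verify that the maximum-entropy bound on $h(\bar X)$ and the formula for $h(\bar X\mid X^1,\dots,X^K)$ are applied to a genuinely continuous $\bar X$, so that every differential entropy in the chain is finite and the subtraction is valid.
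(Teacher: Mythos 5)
Your proof is correct and, at the top level, follows the same strategy as the paper's: reduce the single-input leakage to the leakage of the aggregate signal, then apply the additive-Gaussian-noise mutual-information bound $I \le \tfrac12\log\bigl(1+\mathrm{Var}/(\alpha_{(K+1)}^2\sigma^2)\bigr) \le \mathrm{Var}/(2\alpha_{(K+1)}^2\sigma^2)$ together with the linear-in-$K$ variance estimate $\mathrm{Var}\bigl(\sum_l\alpha_lX^l\bigr)\le K\bar\alpha^2C_1^2$. The execution differs in both steps, and in each case your route is a bit cleaner. For the reduction, you use only the chain rule and nonnegativity of conditional mutual information to get $I(X^j;\bar X)\le I(X^1,\dots,X^K;\bar X)$, which needs no independence at that stage; the paper instead writes $I(X^j;\bar X)=H(\bar X)-H\bigl(\sum_{l\neq j}\alpha_lX^l+\alpha_{(K+1)}R\bigr)$, an identity that is valid only when $X^j$ is independent of the remaining inputs and the noise, and then invokes its Lemma 1 ($H(X+Y)\ge H(Y)$ for independent summands) to drop the residual signal term. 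For the Gaussian bound, the paper cites its Lemma 2 (from Cover and Thomas, Sec.~9.4), whereas you re-derive it inline from the maximum-entropy property of the Gaussian and the exact value of $h(\bar X\mid X^1,\dots,X^K)$ --- this is precisely the standard proof of that lemma, so nothing is lost. Finally, you are right, and it is worth saying explicitly, that the factor $K$ (rather than $K^2$) hinges on the $X^i$ being uncorrelated and $R$ being independent of them: the theorem statement calls the $X^i$ ``scalars'' and never states these hypotheses, yet the paper's own proof uses them both in its entropy identity and in the variance additivity $\mathrm{Var}\bigl(\sum_l\alpha_lX^l\bigr)=\sum_l\alpha_l^2\,\mathrm{Var}(X^l)$; your version isolates them and in fact needs strictly weaker assumptions than the paper's chain of equalities. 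Both routes land on the identical final bound.
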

\begin{proof}
Since $\alpha_{i,j}$'s are non-zero, we have
\begin{align}\label{eq:initial_bound}
    &I\left(X^j;\bar X\right)=I\left(\alpha_{j}X^j;\bar X\right)\nonumber\\
    &\overset{\mathrm{(1)}}{=}I\left(\alpha_{j}X^j;\sum_{l=1}^K \alpha_{l} X^l + \alpha_{(K+1)} R\right)\nonumber\\
    &\overset{\mathrm{(2)}}{=}H\left(\sum_{l=1}^K \alpha_{l} X^l + \alpha_{(K+1)} R\right) \nonumber \\ &- H\left(\sum_{\substack{l=1\\l\neq j}}^K \alpha_{l} X^l + \alpha_{(K+1)} R\right)\nonumber\\
    &\overset{\mathrm{(3)}}{\leq}H\left(\sum_{l=1}^K \alpha_{l} X^l + \alpha_{(K+1)} R\right)- H\left(\alpha_{(K+1)} R\right)\nonumber\\
    &=I\left( \sum_{l=1}^K \alpha_{l} X^l ; \sum_{l=1}^K \alpha_{l} X^l  + \alpha_{(K+1)} R\right)~.
\end{align}
Here, for equality (1), we simply replace $\bar X^i$ with its definition. (2) is due to the definition of the mutual information ( $I(X;X+Y)=H(X+Y)-H(Y)$). Finally, inequality (3) holds due to Lemma \ref{lemma:indep_ineq}. \\
Now, note that since $|X^l|\leq C_1$, we have
\begin{align}
    \text{Var}\left( \sum_{l=1}^K \alpha_{l} X^l \right)=\sum_{l=1}^K \text{Var}\left( \alpha_{l} X^l \right)\leq K\bar\alpha^2 C_1^2
\end{align}
Also $\alpha_{(K+1)}R$ is a zero-mean Gaussian random variable with variance $\alpha_{(K+1)}^2\sigma^2$. Therefore, using Lemma \ref{lemma:bound_sum}, we have
\begin{align}\label{eq:bound_lastpasrt}
    &I\left( \sum_{l=1}^K \alpha_{l} X^l ; \sum_{l=1}^K \alpha_{l} X^l  + \alpha_{(K+1)} R\right)&\leq\nonumber\\ &\frac{\text{Var}\left( \sum_{l=1}^K \alpha_{l} X^l \right)}{2\alpha_{(K+1)}^2\sigma^2}\leq \frac{KC_1^2\bar{\alpha}^2}{2\underset{\bar{}}{\alpha}^2\sigma^2}
\end{align}
Finally, using \eqref{eq:initial_bound}, \eqref{eq:bound_lastpasrt}, we conclude that 
\begin{align}\label{eq:final_bound}
    I\left(X^j;\bar X\right)\leq\frac{KC_1^2\bar{\alpha}^2}{2\underset{\bar{}}{\alpha}^2\sigma^2}
\end{align}
\end{proof}
\begin{lemma}\label{lemma:indep_ineq}
Suppose that $X$ and $Y$ are two independent random variables. Then we have,
\begin{align}
\max \left\{ H(X),H(Y) \right\}   \leq H(X+Y)~.
\end{align}
\end{lemma}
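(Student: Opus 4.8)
The plan is to prove the two one-sided inequalities $H(X+Y)\geq H(X)$ and $H(X+Y)\geq H(Y)$ separately; taking their maximum then yields the claim. Since the roles of $X$ and $Y$ are symmetric, it suffices to establish $H(X+Y)\geq H(X)$ and then repeat the argument with the variables interchanged.

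First I would invoke the standard fact that conditioning does not increase entropy, which gives $H(X+Y)\geq H(X+Y\mid Y)$. Next I would evaluate the conditional term: fixing $Y=y$, the random variable $X+Y$ reduces to $X+y$, a deterministic translate of $X$. Because entropy is invariant under translation by a constant, $H(X+y\mid Y=y)=H(X\mid Y=y)$; and because $X$ and $Y$ are independent, the conditional law of $X$ given $Y=y$ coincides with its marginal law, so $H(X\mid Y=y)=H(X)$ for every $y$. Averaging over $y$ yields $H(X+Y\mid Y)=H(X)$.

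Chaining the two steps gives $H(X+Y)\geq H(X+Y\mid Y)=H(X\mid Y)=H(X)$, and the identical argument with $X$ and $Y$ swapped gives $H(X+Y)\geq H(Y)$. Together these imply $H(X+Y)\geq\max\{H(X),H(Y)\}$, which is exactly the statement of the lemma. Note that this is precisely the inequality needed for step (3) in the proof of Theorem~\ref{thm:info_leakage}, where the term $H\left(\sum_{l\neq j}\alpha_l X^l + \alpha_{(K+1)}R\right)$ is bounded below by $H(\alpha_{(K+1)}R)$ by taking $Y=\alpha_{(K+1)}R$.

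The argument is uniform across the discrete and continuous (differential entropy) settings, since both the conditioning inequality and the translation invariance of entropy hold in each case; in the relevant application $Y=\alpha_{(K+1)}R$ is Gaussian, so all densities exist and the entropies are well defined. The main—indeed the only—subtle point is justifying the translation-invariance step, i.e.\ that adding the constant $y$ leaves the conditional entropy unchanged; once that is in hand, the rest is a routine composition of standard entropy inequalities.
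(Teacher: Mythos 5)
Your proof is correct and is essentially the paper's own argument: the paper likewise combines ``conditioning does not increase entropy'' with translation invariance and independence, writing $H(X+Y)\geq H(X+Y\mid X)=H(Y\mid X)=H(Y)$ and then appealing to symmetry. Your version merely conditions on $Y$ first and spells out the translation-invariance step pointwise at $Y=y$, which is a slightly more detailed rendering of the same chain of identities.
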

\begin{proof}
Since $X$ and $Y$ are independent, we have $H(X+Y | X)=H(Y|X)$ and $H(Y|X)=H(Y)$. Therefore,
\begin{align}
    H(X+Y)\geq H(X+Y|X)=H(Y|X)=H(Y)~.
\end{align}
The same argument shows that $H(X+Y)\geq H(X)$, which concludes the proof.
\end{proof}
\begin{lemma}\label{lemma:bound_sum}
Assume that $X_i\sim P_{X_i}$ is a random variable, and $R_i\sim \mathcal N(0,\sigma_i^2)$ is a Gaussian random variable with variance $\sigma^2$ and mean $0$. Also, assume that $X_i$s and $R_i$s are independent. Then we have,
\begin{align}
    &I(X^1, X^2,..., X^n;X^1+R^1,X^2+R^2,...,X^k+R^n)  \nonumber\\ & \quad \leq\sum_{i=1}^N \frac 1 2 \log\left(1+\frac{\text{Var}(X^i)}{\sigma^2_{i}}\right)\leq \sum_{i=1}^N  \frac{\text{Var}(X^i)}{2\sigma^2_{i}},
\end{align}
where $\text{Var}(X^i)$ is variance of the random variable $X^i$.
\end{lemma}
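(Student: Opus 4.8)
The plan is to reduce the vector mutual information to a sum of scalar terms through the identity $I(\mathbf X;\mathbf Y)=h(\mathbf Y)-h(\mathbf Y\mid\mathbf X)$, and then to control each scalar term by the fact that the Gaussian maximizes differential entropy for a fixed variance. Writing $\mathbf X=(X^1,\dots,X^n)$ and $\mathbf Y=(X^1+R^1,\dots,X^n+R^n)$, I would first observe
\[
I(\mathbf X;\mathbf Y)=h(\mathbf Y)-h(\mathbf Y\mid\mathbf X)=h(\mathbf Y)-h(R^1,\dots,R^n),
\]
where the last step uses that, conditioned on $\mathbf X$, each $Y^i$ is the noise $R^i$ shifted by a constant, so the conditional entropy equals the (unconditional) entropy of the noise vector, the $R^i$ being independent of $\mathbf X$.

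Next I would peel the two joint entropies apart. By subadditivity of differential entropy, $h(\mathbf Y)\le\sum_{i=1}^n h(X^i+R^i)$, while mutual independence of the $R^i$ gives $h(R^1,\dots,R^n)=\sum_{i=1}^n h(R^i)$. Subtracting yields $I(\mathbf X;\mathbf Y)\le\sum_{i=1}^n\big(h(X^i+R^i)-h(R^i)\big)=\sum_{i=1}^n I(X^i;X^i+R^i)$. It is worth flagging that this step needs only the independence of the noise coordinates and requires no independence assumption among the $X^i$, since subadditivity absorbs any dependence among them.

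For each scalar term I would invoke the maximum-entropy bound $h(X^i+R^i)\le\tfrac12\log\!\big(2\pi e\,\Var(X^i+R^i)\big)$, and, by independence of $X^i$ and $R^i$, $\Var(X^i+R^i)=\Var(X^i)+\sigma_i^2$. Since $h(R^i)=\tfrac12\log(2\pi e\,\sigma_i^2)$, the difference collapses to $I(X^i;X^i+R^i)\le\tfrac12\log\!\big(1+\Var(X^i)/\sigma_i^2\big)$, and summing over $i$ gives the first claimed inequality. The second inequality is then the elementary estimate $\log(1+t)\le t$ for $t\ge0$, read with the natural logarithm so that entropies are measured in nats, consistent with the form of the bound.

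The main obstacle—really the only place demanding care—is ensuring the entropy manipulations are legitimate: the differential entropies must be well defined, which holds because each $X^i+R^i$ is absolutely continuous thanks to the Gaussian noise, even when $X^i$ is discrete or merely bounded as in Theorem~\ref{thm:info_leakage}; and the maximum-entropy bound must be applied to the noisy variable $X^i+R^i$ rather than to $X^i$ itself. I would also note the minor typo that the variance of $R_i$ should read $\sigma_i^2$ throughout the statement.
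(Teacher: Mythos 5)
Your proof is correct, and it is essentially the argument the paper relies on: the paper gives no in-text proof of this lemma, deferring to Section~9.4 of Cover and Thomas, and your derivation is precisely that standard parallel-Gaussian-channel argument --- write $I(\mathbf X;\mathbf Y)=h(\mathbf Y)-h(\mathbf R)$, split $h(\mathbf Y)$ by subadditivity and $h(\mathbf R)$ by independence of the noise coordinates, bound each scalar term $I(X^i;X^i+R^i)\le\tfrac12\log\bigl(1+\mathrm{Var}(X^i)/\sigma_i^2\bigr)$ via the maximum-entropy property of the Gaussian, and finish with $\log(1+t)\le t$. Your two side remarks are also accurate: the argument needs mutual independence of the $R^i$ (but none among the $X^i$, which matters since the paper applies the lemma to dependent quantities), and the statement's ``variance $\sigma^2$'' should indeed read $\sigma_i^2$.
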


Please refer to section 9.4 of~\cite{cover1999elements} for the detailed proof of Lemma~\eqref{lemma:bound_sum}.

Theorem \ref{thm:info_leakage} shows that by increasing the power of the noise, one can arbitrarily reduce the leaked information. Please note that for deep learning applications normalization is common in the prepossessing phase. Furthermore, many of the networks such as MobileNet and ResNet variants take advantage of the batch normalization layers. Hence, the value of $C_1$ in the above theorem is bound by $N^{(\frac{-1}{2})}$ in case $\ell_2$ normalization is used (which obviously implies $C_1 \leq 1$). With a batch size of K = 2, setting variance of the noise, $\mathbf r$, to be $\sigma^2=4e^8$, and limiting $\frac{\bar\alpha^2}{\underset{\bar{}} {\alpha^2}} < 10$, we have the upper bound of $5e^{-8}$ on the leaked information, 
Because our amount of leakage is less thank the precision loss(round off error) in IEEE single-precision arithmetic, we achieve perfect privacy; meaning that the amount of data leakage is less than the accuracy loss due to round off error~\citep{guo2020secure}.

\section{Colluding GPUs}
\label{ap:collude}
In this section, we investigate the scenario in which multiple GPUs can collaborate to extract information from the encoded data. With $K'$ GPUs and virtual batch size of $K$, we can tolerate $M < K'-K$ colluding GPUs without compromising privacy. We show how we can securely outsource calculating $\langle\mathbf W,\mathbf x^{(i)}\rangle$, $i=1,\dots, K$, to the GPUs. We first create $P=M+K$ encoded data vectors, $\bar{\mathbf x}^i$, $i=1,\dots, P$, using $M$ noise vectors $\mathbf R^1,\dots,\mathbf R^M$, as follows.
\begin{align}
    &\bar{\mathbf X}=\mathbf X\mathbf A_1+\mathbf R\mathbf A_2~,\quad\text{where ,}\nonumber\\
    &\bar{\mathbf X}=\left[\bar{\mathbf x}^1,\dots,\bar{\mathbf x}^{P}\right]\in\mathbb R^{N\times P}~,\nonumber\\
    &{\mathbf X}=\left[{\mathbf x}^1,\dots,{\mathbf x}^K\right]\in\mathbb R^{N\times K}~,\nonumber\\
    &{\mathbf R}=\left[{\mathbf R}^1,\dots,{\mathbf R}^M\right]\in\mathbb R^{N\times M}~,\nonumber\\
    &\text{and ,}~~\mathbf A_1\in\mathbb R^{K\times P}~,~~\mathbf A_2\in\mathbb R^{M\times P}~.
\end{align}
Here, the matrices $\mathbf A_1$ and $\mathbf A_2$ are the encoding coefficient similar to the initial scheme we used for DarKnight. Theorem~\ref{thm:colluding_GPUs1} provides privacy guarantees for this approach under very mild conditions on the matrix $\mathbf A_2$.

\begin{thm}\label{thm:colluding_GPUs1}
In the encoding scheme described above, assume that the encoding matrix $\mathbf A_2$ is a full-rank matrix, such that for every column $\mathbf A_2^{(i)}$ in $\mathbf A_2$, we have $\|\mathbf A_2^{(i)}\|_2\geq C$. Also assume that the vectors $\mathbf R^i$ are independently drawn from $\mathcal N(\mathbf 0,\sigma^2\mathbb I)$.Then the maximum leaked information with $M$ colluding GPUs is bounded by
\begin{align}
     \sum_{i,j}\frac{\text{Var}(X^{i,j})}{C\sigma^2}
\end{align}
\end{thm}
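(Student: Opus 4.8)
The plan is to bound the only quantity that governs privacy, namely the mutual information $I(\mathbf X;\bar{\mathbf X}_S)$ between the protected data $\mathbf X$ and the collection $\bar{\mathbf X}_S$ of the $M$ encoded columns that the colluding GPUs can jointly observe (each honest-but-curious GPU holds one column of $\bar{\mathbf X}$, so $M$ colluders see $M$ columns). Fixing an arbitrary index set $S$ of size $M$ and writing $\tilde{\mathbf A}_1,\tilde{\mathbf A}_2$ for the corresponding column-restrictions of $\mathbf A_1,\mathbf A_2$, the observation is $\bar{\mathbf X}_S=\mathbf X\tilde{\mathbf A}_1+\mathbf R\tilde{\mathbf A}_2$: a signal term plus a Gaussian masking term, exactly the vector analogue of the single-GPU model of Theorem~\ref{thm:info_leakage}. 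Since the ``maximum'' is taken over which set collides, it suffices to bound the leakage uniformly in $S$; the full-rank hypothesis on $\mathbf A_2$ guarantees $\tilde{\mathbf A}_2$ is nonsingular, so the masking covariance introduced below is nondegenerate and all differential entropies are finite.

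The key reduction is to peel the problem apart along the $N$ coordinates of the feature dimension. The entries of $\mathbf R$ are i.i.d.\ across rows, so the rows of $\mathbf R\tilde{\mathbf A}_2$ are independent; combining additivity of entropy for the masking term with subadditivity of differential entropy for $\bar{\mathbf X}_S$ gives $I(\mathbf X;\bar{\mathbf X}_S)\le\sum_{i=1}^N I(\mathbf X_i;\mathbf X_i\tilde{\mathbf A}_1+\mathbf W_i)$, where $\mathbf X_i$ is the $i$-th row of $\mathbf X$ and $\mathbf W_i\sim\mathcal N(\mathbf 0,\sigma^2\tilde{\mathbf A}_2^\intercal\tilde{\mathbf A}_2)$. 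For each row I would discard the part of $\mathbf X_i$ that never reaches the channel via the data-processing inequality, $I(\mathbf X_i;\mathbf X_i\tilde{\mathbf A}_1+\mathbf W_i)\le I(\mathbf X_i\tilde{\mathbf A}_1;\mathbf X_i\tilde{\mathbf A}_1+\mathbf W_i)$, and then invoke the Gaussian-channel bound (the vector version of Lemma~\ref{lemma:bound_sum}, whose scalar form already supplies the $\tfrac12\log(1+\text{Var}/\sigma^2)\le\text{Var}/2\sigma^2$ step). Because Gaussian noise maximizes entropy at fixed covariance, $I(\mathbf X_i\tilde{\mathbf A}_1;\mathbf X_i\tilde{\mathbf A}_1+\mathbf W_i)\le\tfrac12\log\det\!\big(\mathbb I+\Sigma_{\mathbf W}^{-1}\Sigma_i\big)\le\tfrac12\Tr\!\big(\Sigma_{\mathbf W}^{-1}\Sigma_i\big)$, with $\Sigma_{\mathbf W}=\sigma^2\tilde{\mathbf A}_2^\intercal\tilde{\mathbf A}_2$ and $\Sigma_i=\tilde{\mathbf A}_1^\intercal\,\text{Cov}(\mathbf X_i)\,\tilde{\mathbf A}_1$. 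Lemma~\ref{lemma:indep_ineq} plays the same role as in Theorem~\ref{thm:info_leakage}, letting us drop the signal contribution from the conditional entropy.

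The last step, and the main obstacle, is to convert the trace $\tfrac1{\sigma^2}\Tr\big((\tilde{\mathbf A}_2^\intercal\tilde{\mathbf A}_2)^{-1}\tilde{\mathbf A}_1^\intercal\,\text{Cov}(\mathbf X_i)\,\tilde{\mathbf A}_1\big)$ into the clean per-coordinate form $\tfrac1{C\sigma^2}\sum_{j}\text{Var}(X^{i,j})$ and then sum over $i$ to reach $\sum_{i,j}\text{Var}(X^{i,j})/(C\sigma^2)$. This is where both hypotheses on $\mathbf A_2$ must be spent: full-rankness makes $\tilde{\mathbf A}_2^\intercal\tilde{\mathbf A}_2$ positive definite, and the column-norm lower bound $\|\mathbf A_2^{(i)}\|_2\ge C$ is what controls the spectrum of $(\tilde{\mathbf A}_2^\intercal\tilde{\mathbf A}_2)^{-1}$ (its diagonal entries are the squared column norms, each $\ge C^2$), thereby limiting the extent to which the colluders can invert the mask. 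The delicate bookkeeping is to show that, under the design choice of a well-conditioned (essentially orthogonal) $\mathbf A_2$, the smallest masking eigenvalue is governed by $C$ so that $\Sigma_{\mathbf W}^{-1}\preceq (C\sigma^2)^{-1}\mathbb I$ after absorbing $\tilde{\mathbf A}_1$, collapsing the trace to $\sum_j\text{Var}(X^{i,j})$ scaled by $1/(C\sigma^2)$; carefully tracking whether the final constant is $C$ or $C^2$ is the one subtlety, and I would pin down the exact normalization of $\mathbf A_1,\mathbf A_2$ in the encoding to match the stated bound before declaring the argument complete.
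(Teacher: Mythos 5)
Your route is the same one the paper takes: fix a colluding set $S$ of size $M$, treat the joint observation $\mathbf X\mathbf A_1(:,S)+\mathbf R\mathbf A_2(:,S)$ as a Gaussian channel, and finish with Lemma~\ref{lemma:bound_sum}. Your row-wise decomposition, data-processing step, and $\log\det$-to-trace chain are a more careful rendering of what the paper compresses into a single substitution, in which the correlated noise matrix $\mathbf R\mathbf A_2(:,S)$ is replaced outright by a scaled iid Gaussian matrix on the strength of the marginal identity $\mathbf R\mathbf v\sim\mathbf g\|\mathbf v\|_2$.

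However, the obstacle you flag in your last paragraph is a genuine gap, not bookkeeping, and it cannot be closed under the stated hypotheses. What your trace bound needs is control of $\bigl(\mathbf A_2(:,S)^\intercal\mathbf A_2(:,S)\bigr)^{-1}$, i.e., a lower bound on the smallest singular value of $\mathbf A_2(:,S)$; the theorem's assumptions only lower-bound the diagonal of the Gram matrix, which says nothing about its smallest eigenvalue. Concretely, take $M=2$ and columns $\mathbf a_1=C(1,0)^\intercal$, $\mathbf a_2=C(\cos\theta,\sin\theta)^\intercal$: both hypotheses hold for every $\theta\neq 0$, yet the colluders can form $\cos\theta\,\bar{\mathbf x}^{(1)}-\bar{\mathbf x}^{(2)}$, whose residual noise has standard deviation $C\sigma\lvert\sin\theta\rvert$ per entry, so the leakage diverges as $\theta\to 0$ while the claimed bound stays fixed. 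The statement therefore needs a strengthened hypothesis (a smallest-singular-value bound on every $M$-column submatrix of $\mathbf A_2$, i.e., your ``essentially orthogonal'' design), under which your argument does go through and yields $\sum_{i,j}\mathrm{Var}(X^{i,j})/(2C^2\sigma^2)$ --- confirming your suspicion that the constant should involve $C^2$ rather than $C$. You should also know that the paper's own proof commits exactly the same error: the identity $\mathbf R\mathbf v\sim\mathbf g\|\mathbf v\|_2$ fixes only the marginal law of each noise column, and the columns of $\mathbf R\mathbf A_2(:,S)$ are jointly independent only when the columns of $\mathbf A_2(:,S)$ are orthogonal, which is precisely the condition your extra assumption restores.
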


\begin{proof}
Assume that a subset $S\subseteq [1,\dots K']$ of the $K'$ GPUs are colluding and $|S|=M$. Thus, those GPUs have are given the encoded vectors $\{\bar{\mathbf x}^i\}^{i\in S}$. Our goal is to bound the mutual information between $\{\bar{\mathbf x}^i\}^{i\in S}$ and $\mathbf X$.
\begin{align}
    I\left({\mathbf X}~;~ \mathbf X\mathbf A_1(:,S)+\mathbf R\mathbf A_2(:,S)\right)~.
\end{align}
Here, for a matrix $M$, $M(:,S)$ denotes a sub-matrix of $M$, whose columns are chosen from the set $S$. Note that the matrix $\mathbf A_2(:,S)$ is full-rank, whose norm of each column is lower-bounded by $C$. Therefore, 
\begin{align}
    &I\left({\mathbf X}~;~ \mathbf X\mathbf A_1(:,S)+\mathbf R\mathbf A_2(:,S)\right)\nonumber\\
    &\qquad\qquad\qquad \leq I\left({\mathbf X}~;~ \mathbf X\mathbf A_1(:,S)+C\sigma^2\bar{\mathbf R}\right)~,
\end{align}
where $\bar{\mathbf R}$ is a matrix with iid standard Gaussian entries. This is because for a Gaussian matrix $\mathbf M$ and a vector $\mathbf v$, we have $\mathbf M\mathbf v\sim \mathbf g \|\mathbf v\|$, where $\mathbf g$ is a Gaussian vector. Now, simply using Lemma \ref{lemma:bound_sum} yields
\begin{align}
    &I\left({\mathbf X}~;~ \mathbf X\mathbf A_1(:,S)+\mathbf R\mathbf A_2(:,S)\right)\nonumber\\
    &\qquad\qquad\qquad \leq I\left({\mathbf X}~;~ \mathbf X\mathbf A_1(:,S)+C\sigma^2\bar{\mathbf R}\right)\nonumber\\
    &\qquad\qquad\qquad  \leq\sum_{i,j} \frac{\text{Var}(X^{i,j})}{C\sigma^2}~,
\end{align}
and this concludes the proof.
\end{proof}
As you saw in the proof, we needed every sub-matrix $\mathbf A_2(:,S)\in\mathbb R^{M\times |S|}$ has linearly independent columns. That is why it was necessary to have at most $M$ colluding GPUs ($|S|\leq M$) when we use $M$ noise vectors in our scheme. In the other words, when using $M$ noise vectors (which required $M$ extra equations/GPUS), we can tolerate at most $M$ colluding GPUs.

Now that we took care of inference as described above, we would like to update our training procedure for this new scenario. Same as before, we can calculate the weight updates using the following equations:
\begin{equation}\label{eq:gamma_lin_colluding}
\triangledown \mathbf{W} = \sum_{j=1}^{P}  \gamma_{j} \text{Eq}_{j}, \qquad \text{Eq}_{j} = \left\langle \sum_{i=1}^K \beta_{j,i}~ \mathbf \delta^{(i)}~,{\color{blue}\bar{\mathbf x}^{(j)}} \right\rangle~\quad~
\end{equation}
We now define
\begin{align}
    \mathbf A=\begin{bmatrix}
    \mathbf A_1\\
    \mathbf A_2
    \end{bmatrix}~, \mathbf B=\begin{bmatrix}
    \beta_{j,i}
    \end{bmatrix}~, \Gamma=\text{Diag}(\gamma_1,\dots,\gamma_K)
\end{align}
Now, it is easy to show that if 
\begin{equation}
    \mathbf B^\intercal\cdot \mathbf \Gamma\cdot \mathbf A = \begin{bmatrix}1 & 0 & \dots & 0 & 0 & \dots & 0
  \\0 & 1 & 0 & \dots & 0 & \dots & 0
  \\\vdots & \ddots& \ddots & \ddots & \vdots & \ddots  
\\ 0 & \dots & 0 & 1 & 0 &\dots & 0\end{bmatrix}_{K \times K'}
\label{eq:matrix_relation}
\end{equation}

\section{Experimental Setup and Results}
\label{sec:experiments}
\begin{figure*}[tbp]
  \centering
  \begin{subfigure}[b]{0.3\linewidth}
    \includegraphics[width=\linewidth]{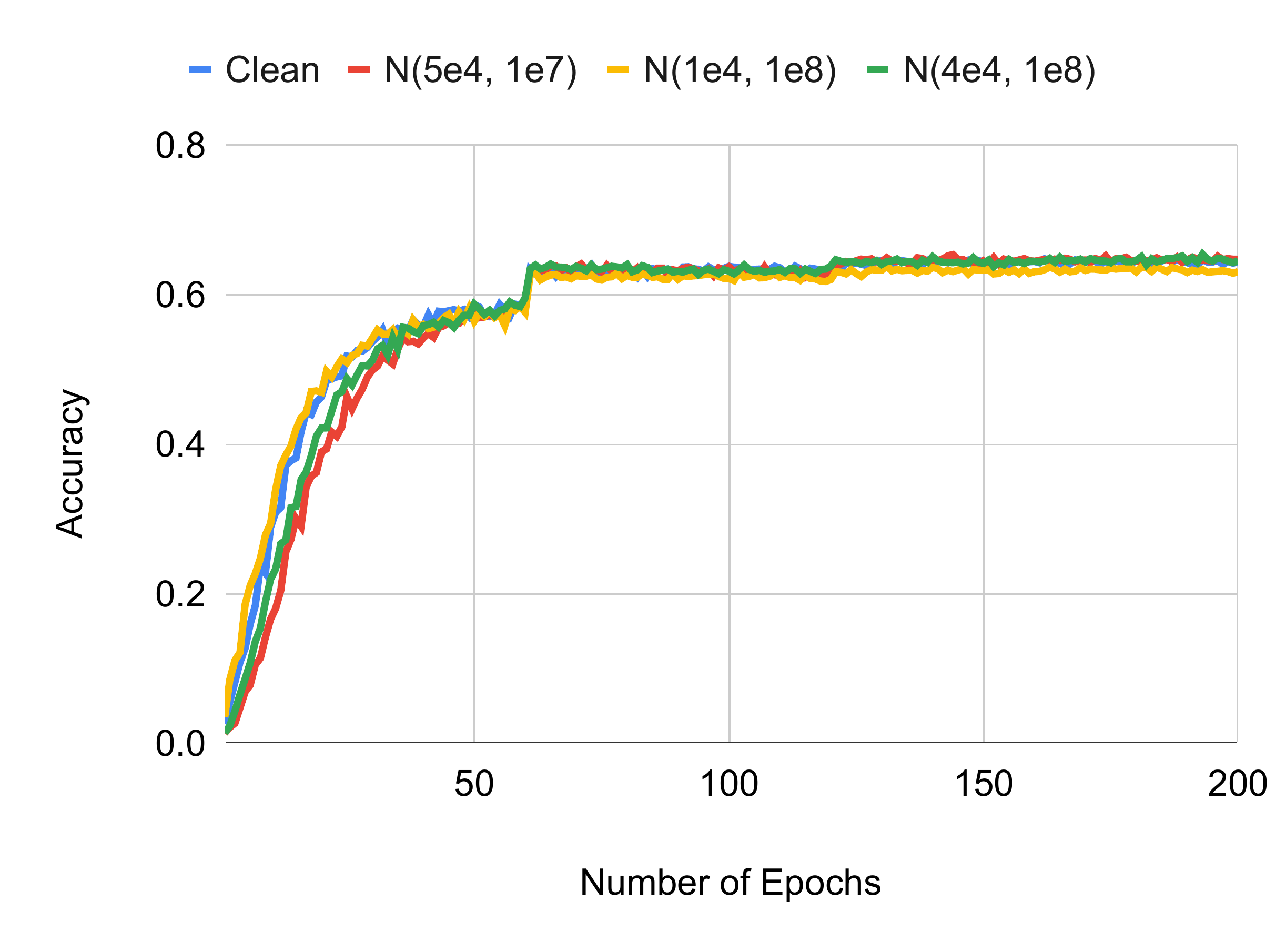}
  \end{subfigure}
  \begin{subfigure}[b]{0.3\linewidth}
    \includegraphics[width=\linewidth]{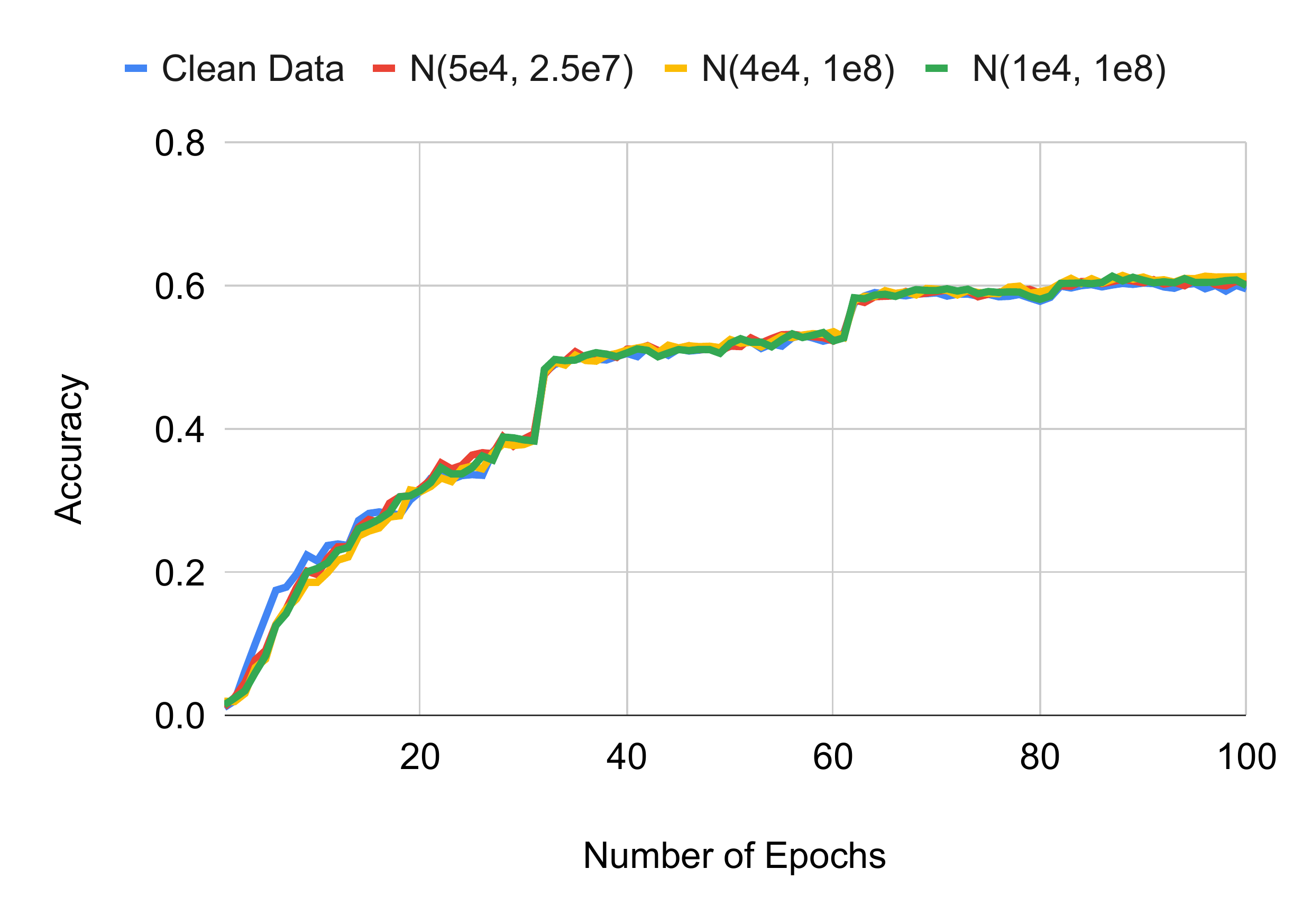}
  \end{subfigure}
    \begin{subfigure}[b]{0.3\linewidth}
    \includegraphics[width=\linewidth]{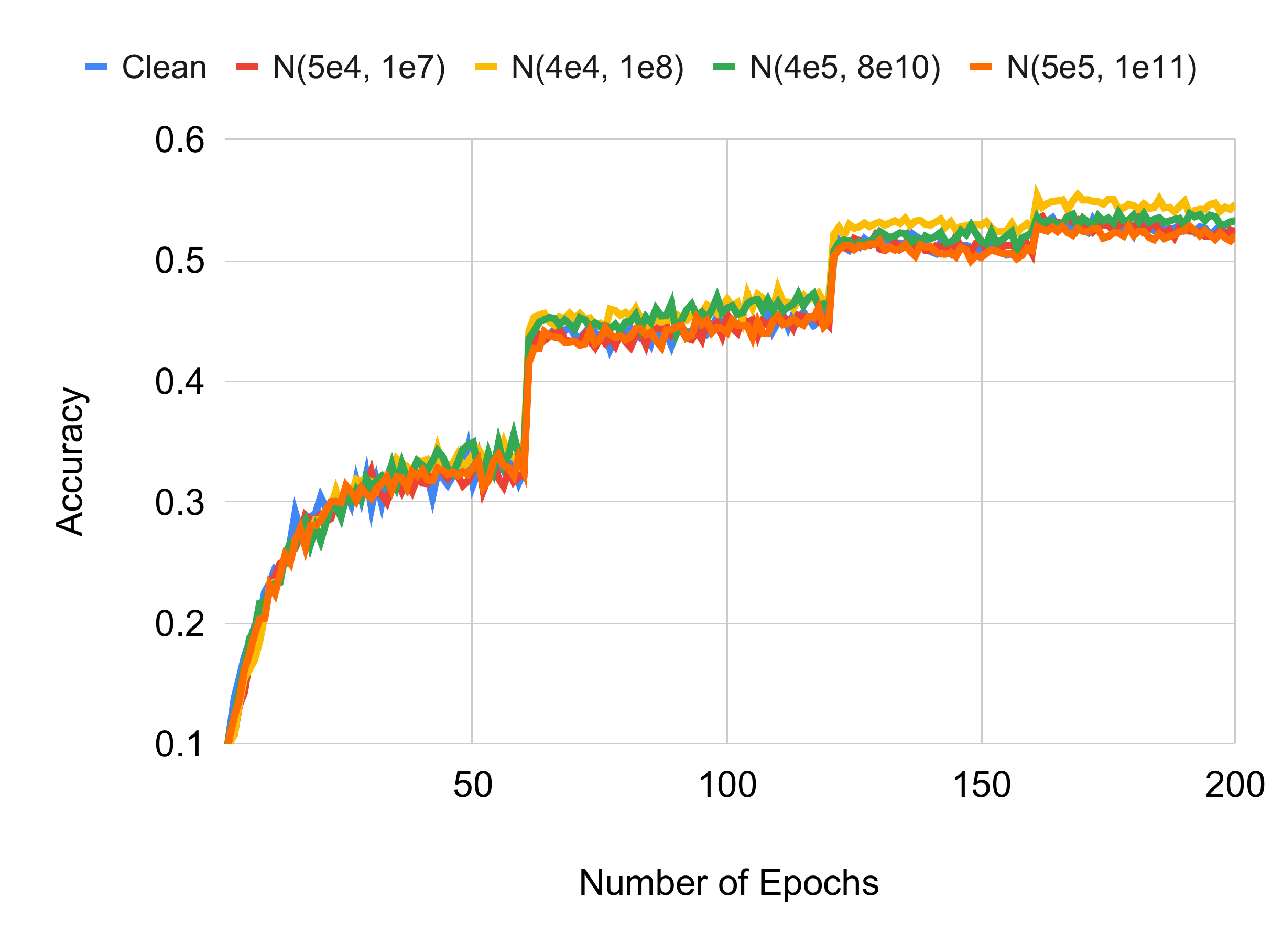}
  \end{subfigure}
  \caption{Training accuracy of DarKnight for CIFAR-100 with (a) VGG16 (b) ResNet152 (c) MobileNetV2}
  \label{fig:training}
\end{figure*}
DarKnight server consisted of an Intel Coffee Lake E-2174G 3.80GHz processor and Nvidia GeForce GTX 1080 Ti GPUs.
The server has 64 GB RAM and supports Intel Soft Guard Extensions (SGX).
DarKnight's training scheme and the related unique coding requirements are implemented as an SGX enclave thread where both the decoding and encoding are performed. For SGX implementations, we used Intel Deep Neural Network Library (DNNL) for designing the DNN layers including the Convolution layer, ReLU, MaxPooling, and Eigen library for Dense layer. We used Keras 2.1.5, Tenseflow 1.8.0, and Python 3.6.8. 

We used three different DNN models: VGG16~\citep{simonyan2014very}, ResNet152~\citep{he2016deep} and, MobileNetV2~\citep{sandler2018mobilenetv2}. We chose MobileNetV2 because it is the worst-case benchmark for our model as it reduces linear operations considerably (using depth-wise separable convolution), thereby reducing the need for GPU acceleration. 
We used ImageNet~\citep{russakovsky2015imagenet}, CIFAR-10 and CIFAR-100~\citep{krizhevsky2009learning} as our datasets. 
All the parameters, models' and implementation details, and dataset descriptions are attached in the supplementary material.

\subsection{Training Results}
For evaluating training performance, three aspects are examined: accuracy impact, speed up of training, and maximum information leakage. 

\textbf{Effect of Random Noise on Accuracy}: Adding large noise to inputs to encode the data may cause floating-point rounding errors on GPUs. To study the impact, Fig.~\ref{fig:training} shows the training accuracy when using different noise strengths on VGG16, ResNet152, and MobileNetV2. 
We use a random Gaussian vector with iid entries, $\mathcal N(\mu,\sigma^2)$, as the noise vectors $\mathbf r_i$'s, where $\sigma^2$ is the order of magnitude strength over the typical input and feature map values seen in a model. For instance, $\mathcal N(5e^4,1e^7)$ means the noise is drawn from a distribution with the mean at $5e^4$ and variance of $1e^7$. Figure~\ref{fig:training} (a) shows the accuracy of training for VGG16 on CIFAR-100 dataset. Even with a powerful noise signal ($\sigma^2=e^8$), the accuracy loss after epoch $50$ is less than $0.001$ compared to training on open data without any privacy controls. Very similar behavior is observed across a wide range of input datasets and models.  

\begin{table*}[htbp]
\caption{Effect of different noise signals on the accuracy of DarKnight inference for different models on ImageNet}
 \vskip -0.1in
\label{tab:inferenceAcc}
\resizebox{\textwidth}{!}{%
\begin{tabular}{cccccccc}
            \hline
            \hline
            & \multicolumn{2}{c}{VGG16}   & \multicolumn{2}{c}{ResNet152} & \multicolumn{2}{c}{MobileNetV1} & {All Models} \\ \hline

Noise       & Top1 Accuracy & Top5 Accuracy & Top1 Accuracy & Top5 Accuracy& Top1 Accuracy& Top5 Accuracy & \textbf{MI upper bound}\\ \hline

No privacy       & 64.26 & 85.01 &  72.93  &  90.60     &   64.96    &   85.29 &   --  \\
$\mathcal N(4e3, 1.6e7)$ & 64.23&  85.01    &  72.46 &  90.47  &   64.99 &   85.26 &  
$1.25*10^{-6}$ \\
$\mathcal N(1e4, 2.5e7)$ & 64.25&85.06& 72.35  & 90.23 &64.81 & 85.26 & $0.8*10^{-6}$\\
$\mathcal N(1e4, 1e8)$ &  64.25& 85.05 &  71.87 &  89.93  & 64.54   &  85.15 & $2*10^{-7}$ \\
$\mathcal N(0, 4e8)$ &  64.24 &  85.01&  72.24 &  90.09  & 64.87  & 85.19 & $\mathbf{5*10^{-8}}$ \\
 \hline
\end{tabular}
}
 \vskip -0.1in
\end{table*}

\textbf{Information Leakage and Mutual Information}: 
Table~\ref{tab:inferenceAcc} show accuracy impact of  various noise strengths, on the inference accuracy. For noise strengths that have 7 orders of magnitude higher variance than the input signal, negligible accuracy losses were observed. When the noise strength reaches 8 orders of magnitude ResNet152 seems a worst-case  Top1 accuracy drop of about 1\%. The last column represents the upper bound of mutual information computed from Theorem \ref{thm:info_leakage}. By limiting $\frac{\bar\alpha^2}{\underset{\bar{}} {\alpha^2}} < 10$ for $K=2$ when using $\mathcal N(0, 4e8)$, we have $5\times 10^{-8}$ upper bound on the information leakage which is less than the roundoff error in IEEE single-precision arithmetic and hence, perfect privacy is achieved with this precision.  


\end{document}